\def\BibTeX{{\rm B\kern-.05em{\sc i\kern-.025em b}\kern-.08em
    T\kern-.1667em\lower.7ex\hbox{E}\kern-.125emX}}
\newtheorem{theorem}{Theorem}[section]
\newtheorem{assumption}{Assumption}[section]
\newtheorem{remark}{Remark}[section]
\numberwithin{equation}{section}
\newtheorem{lemma}[theorem]{Lemma}
\DeclareMathOperator{\rank}{\mathrm{rank}}
\DeclareMathOperator{\trace}{trace}
\def\la {\left\langle}
\def\ra {\right\rangle} 
\def \lb{\left(}
\def \rb{\right)}
\newcommand{\matsnorm}[2]{\left\| #1\right\|_{{#2}}}
\newcommand{\fronorm}[1]{\ensuremath{\matsnorm{#1}{\footnotesize{\mathsf{F}}}}}
\newcommand{\opnorm}[1]{\ensuremath{\matsnorm{#1}{}}}
\newcommand{\twonorm}[1]{\ensuremath{\matsnorm{#1}{\footnotesize{2}}}}
\newcommand{\bfm}[1]{\bm{#1}}
\newcommand{\E}[2][]{\mathbb{E}_{#1} \left\{ #2 \rule{0mm}{3mm}\right\}}
\def\va{\bfm a}   \def\mA{\bfm A}  
\def\vb{\bfm b}   \def\mB{\bfm B}  
     \def\C{\mathbb{C}}
   \def\mD{\bfm D}  
\def\ve{\bfm e}     
\def\vg{\bfm g}   \def\mG{\bfm G}  
\def\vh{\bfm h}     
   \def\mI{\bfm I}
   \def\mU{\bfm U}  
   \def\mV{\bfm V}  
   \def\mW{\bfm W}  
\def\vx{\bfm x}   \def\mX{\bfm X}  
\def\vy{\bfm y}   \def\mY{\bfm Y}  
   \def\mZ{\bfm Z}
\def\calA{{\cal  A}}
\def\calD{{\cal  D}}
\def\calG{{\cal  G}} 
\def\calH{{\cal  H}} 
\def\calI{{\cal  I}}
\def\calM{{\cal  M}} 
\def\calO{{\cal  O}} 
\def\calP{{\cal  P}} 
\def\calR{{\cal  R}} 
\def\calS{{\cal  S}}
\newcommand{\bfsym}[1]{\bm{#1}}
             \def\bSigma{\bfsym \Sigma}
\def \calGT {\calG^\ast}
\def \tran {\mathsf{T}}
\def \tranH{\mathsf{H}}
\def \bzero{\bm 0}
\def \tG {\widetilde{\calG}}
\begin{document}

\title{Riemannian Gradient Descent Method to Joint Blind Super-Resolution and Demixing in ISAC\thanks{Corresponding author: Jinchi Chen}
}

\author{
	\IEEEauthorblockN{
	Zeyu Xiang\IEEEauthorrefmark{1},
	Haifeng Wang\IEEEauthorrefmark{2},
	Jiayi Lv\IEEEauthorrefmark{1},
	Yujie Wang\IEEEauthorrefmark{1},
	Yuxue Wang\IEEEauthorrefmark{1},
	Yuxuan Ma\IEEEauthorrefmark{1},
	Jinchi Chen\IEEEauthorrefmark{1}}
\IEEEauthorblockA{\IEEEauthorrefmark{1}School of Mathematics, East China University of Science and Technology, Shanghai, China
	\\\{jcchen.phys\}@gmail.com}
\IEEEauthorblockA{\IEEEauthorrefmark{2}China Mobile (Zhejiang) Research \& Innovation Institute, Hangzhou, China
	\\\{wanghaifeng40\}@zj.chinamobile.com}
}

\maketitle

\begin{abstract}
Integrated Sensing and Communication (ISAC) has emerged as a promising technology for next-generation wireless networks. In this work, we tackle an ill-posed parameter estimation problem within ISAC, formulating it as a joint blind super-resolution and demixing problem. Leveraging the low-rank structures of the vectorized Hankel matrices associated with the unknown parameters, we propose a Riemannian gradient descent (RGD) method. Our theoretical analysis demonstrates that the proposed method achieves linear convergence to the target matrices under standard assumptions. Additionally, extensive numerical experiments validate the effectiveness of the proposed approach.

\end{abstract}

\begin{IEEEkeywords}
Blind dimixing, blind super-resolution, vectorized Hankel lift, Riemannian gradient descent.
\end{IEEEkeywords}

\section{Introduction}

%

ISAC represents an innovative and emerging paradigm aimed at enhancing spectral efficiency by enabling the joint utilization of spectrum resources for both communication and sensing functionalities \cite{lu2024integrated,gonzalez2024integrated,wen2024survey}. By integrating these two functions, ISAC seeks to reduce spectrum congestion and improve the overall performance of wireless systems.

In practical implementations of ISAC, several significant challenges arise. One major challenge is that the transmitted waveform in passive \cite{zheng2017super,sedighi2021localization} or multistatic \cite{dokhanchi2019mmwave} radar systems is often unknown to the receiver. This uncertainty complicates the ability of the receiver to effectively detect and process signals, leading to a decrease in system performance.  

Another critical challenge in ISAC systems is the acquisition of accurate channel state information (CSI) in communication systems, which is inherently complex. For example, in Terahertz (THz) band communication systems \cite{mishra2019toward,lu2024integrated}, the channel coherence time is extremely short, making it difficult to obtain reliable CSI using conventional pilot-based methods. The rapid variability of THz channels, combined with severe path loss and molecular absorption, further complicates the acquisition process. 


In this paper, we consider a generalized scenario where both the radar and communication channels, along with their transmitted signals, are unknown to the common receiver. We formulate the task of estimating channel parameters and transmitted signals as a Joint Blind Super-resolution and Demixing (JBSD) problem. Specifically, we use a subspace lifting technique to leverage the low-dimensional structures inherent in the data matrices related to the channel parameters and transmitted signals. This allows us to frame the JBSD problem as a low-rank matrix demixing problem. To solve the low-rank matrix demixing problem, we propose a novel Riemannian gradient descent framework. This approach enables computationally efficient reconstruction, with exact recovery guarantees established under standard assumptions.





\section{Problem Formulation}
Consider a scenario involving $K$ radar or communication users transmitting signals to a common receiver. The received signal at the receiver is modeled as a superposition of convolutions from $K$ users, expressed mathematically as:
\begin{align*}
	y(t) = \sum_{k=1}^{K} \sum_{p=1}^{r_k} d_{k,p}\delta(t-\tau_{k,p})\ast g_{k}(t),
\end{align*}
where $r_k$ represents  the number of paths for the $k$-th user, $\tau_{k,p}\in [0,1)$ denotes the delay, $d_{k,p}$ is the complex-valued amplitude, and $g_{k}(t)$ is the waveform corresponding to the $k$-th user. By taking the Fourier transform and sampling, we obtain for $j=0,1,\cdots, n$:
\begin{align}\label{eq: observation}
	y[j] = \sum_{k=1}^{K} \sum_{p = 1}^{r} d_{k,p} e^{-\imath 2\pi (j-1)\tau_{k,p}} \hat{g}_{k}[j],
\end{align}
where $\hat{g}_{k}$ represents the Fourier transform of $g_{k}(t)$. Defining $\vg_{k} = \begin{bmatrix}
	g_{k}[0] &\cdots & g_{k}[n-1]
\end{bmatrix}^\tran\in\C^n$, the objective of the JBSD problem is to estimate the parameters $\{d_{k,p}, \tau_{k,p}\}$ as well as the unknown signals $\{\vg_{k}\}$ from the observations in the equation above.

Since the number of measurements is less than the number of unknowns, JBSD is an ill-posed problem. To address this challenge, we adopt a subspace assumption inspired by prior research \cite{vargas2023dual,monsalve2022beurling,jacome2023multi,razavikia2023off}. Specifically, we assume that each signal $\vg_{k}$ lies in a low-dimensional subspace defined by a matrix $\mB_k\in\C^{n\times s_k}$, such that
\begin{align*}
	\vg_{k} = \mB_{k}\vh_{k}, \quad p=1, \cdots, r.
\end{align*}
where $\vh_k\in\C^{s_k}$ is an unknown coefficient vector. This assumption implies that the waveform $\vg_k$ can be represented as a linear combination of a redundant codebook matrix $\mB_k$. 

Under the subspace assumption, the measurements can be expressed as:
\begin{align}
	\label{eq sampling model}
	y[j] =\sum_{k=1}^{K}\la \vb_{k,j}\ve_j^\tran, \mX_{k,\natural}\ra,
\end{align}
where $\mX_{k,\natural} = \sum_{p= 1}^{r_k} d_{k,p} \vh_{k}\va_{\tau_{k,p}}^\tran$, the steering vector $\va_{\tau}\in\C^{n}$ is defined by:
\begin{align*}
 \begin{bmatrix}
	1 & e^{-\imath2\pi \tau} &\cdots & e^{-\imath2\pi\tau(n-1)}
\end{bmatrix}^\tran,
\end{align*}
and $\la\cdot, \cdot\ra$ represents the inner product, defined as $\la\mA,\mB\ra=\trace(\mA^\tranH\mB)$. Without loss of generality, we assume $r_1=\cdots=r_K=r$ and $s_1=\cdots=s_K=s$. 
Let $\calA_k: \C^{s\times n}\rightarrow\C^n$ be a linear operator defined by:
\begin{align*}
	\calA_k(\mX)[j] = \la \vb_{k,j}\ve_j^\tran, \mX\ra,
\end{align*}
with the adjoint operator $\calA^\ast_k$ given by $\calA^\ast_k(\vy)=\sum_{j=1}^{n}\vy[j]\vb_{j,k}\ve_j^\tran$.
Then the measurements in can be expressed as a more compact form:
\begin{align}
	\label{eq: sample}
	\vy =\sum_{k=1}^{K} \calA_k(\mX_{k,\natural}).
\end{align}
The problem of JBSD can thus be formulated as the task of demixing a sequence of matrices $\{\mX_k\}$ from the superposed linear measurements. Once the data matrices are recovered, the delays $\{\tau_{k,p}\}$ can be estimated using spatial smoothing MUSIC \cite{chen2022vectorized,evans1981high, evans1982application,yang2019source}, and the parameters $\{d_{k,p}, \vh_{k}\}$ can be recovered using an overparameterized linear system \cite{chen2024fast}. Thus the main objective of this work is to efficiently recover the data matrices $\{\mX_{k,\natural}\}$ from the measurement \eqref{eq sampling model}. 


When $K=1$, JBSD reduces to the classical blind super-resolution problem. Convex and non-convex approaches have been successfully employed to solve this problem. In particular, the theoretical guarantees of convex methods, including atomic norm minimization \cite{chi2016guaranteed,yang2016super, li2019atomic} and nuclear norm minimization \cite{chen2022vectorized}, have been established. However, these convex optimization-based methods are computationally expensive. To overcome this issue, non-convex optimization-based methods, such as low-rank factorization \cite{mao2022blind, li2024simpler} and low-rank manifold-based \cite{zhu2021blind} approaches, have gained attention for their efficiency in leveraging the low-rank structures of the data matrices.

Given that the measurements are a superposition of multiple users' data, JBSD presents a greater challenge compared to blind super-resolution. Consequently, theoretical guarantees and algorithms developed for blind super-resolution cannot be directly applied to JBSD. Recent research has focused on JBSD, with notable progress achieved through methods such as ANM (Atomic Norm Minimization) \cite{vargas2023dual,jacome2024multi,daei2024timely} and vectorized Hankel lifts\cite{monsalve2023beurling,wang2023blind}. However, compared to blind super-resolution, computationally efficient algorithms for JBSD remain limited. Therefore, developing fast and robust algorithms for JBSD is a critical area of interest. In this work, we propose a Riemannian gradient descent method for solving the aforementioned optimization problem efficiently. 

\section{Algorithms}
Let $\calH:\C^{s\times n}\rightarrow \C^{sn_1\times n_2}$ denote the vectorized Hankel lift operator defined as follows:
\begin{align*}
	\calH(\mX) = \begin{bmatrix}
		\vx_{0} & \vx_1 &\cdots & \vx_{n_2-1}\\
		\vx_{1} & \vx_2 &\cdots  & \vx_{n_2}\\
		\vdots &\vdots &\ddots  &\vdots\\
		\vx_{n_1-1} &\vx_{n_1} &\cdots & \vx_{n-1}\\
	\end{bmatrix}\in\C^{sn_1\times n_2},
\end{align*}
where $\vx_{j}$ is the $(j+1)$-th column of $\mX$ with $j=0,1,\cdots, n-1$, and $n_1+n_2=n+1$. It has shown that $\rank(\calH(\mX_{k,\natural}))=r$ \cite{chen2022vectorized}. Consequently, we consider the following non-convex optimization problem to recover the data matrices:
\begin{align}
	\label{tmp}
	\min_{\mX_k} ~&\frac{1}{2}  \twonorm{\mD\vy - \sum_{k=1}^{K} \calA_k\calD(\mX_k)}^2 ~\text{s.t.}~\rank(\calH(\mX_k)) = r.
\end{align}
Furthermore, we define the operator $\calD:\C^{s\times n}\rightarrow\C^{s\times n}$ as follows:
\begin{align*}
	\calD(\mX) = \begin{bmatrix}
		\sqrt{w_0} \vx_0 &\cdots & \sqrt{w_{n-1}} \vx_{n-1}
	\end{bmatrix},
\end{align*}
where $w_i= \#\{ j+k = i, 0\leq j\leq n_1-, 0\leq k\leq n_2-1 \} $. Let $\mD = \calD(\mI_{n})$ and $\calG = \calH\calD^{-1}$. Denoting $\mZ_k= \calG\calD(\mX_k)$, the optimization \eqref{tmp} can be reformulated as follows:
\begin{align*}
	\min_{\mZ_k}~&\frac{1}{2}\twonorm{\mD\vy - \sum_{k=1}^{K}\calA_k\calG^\ast(\mZ_k)}, \\
	\text{s.t.}~ & \mZ_k\in\calM_{k,r}, ~(\calI - \calG\calG^\ast)(\mZ_k) = \bzero, ~k=1,\cdots, K.
\end{align*}
where $\calM_{k,r}$ is the Riemannian manifold of all rank-$r$ complex matrix, embedded with inner product, the second constraint guarantees that $\mZ_k$ has the vectorized Hankel structure. Define $\mZ = (\mZ_1, \cdots, \mZ_K)$. Let $\calM_r= \calM_{1,r}\times \cdots \times \calM_{K,r}$  be the product manifold. We also consider the following optimization problem:
\begin{align*}
	\min_{\mZ\in\calM_r} f(\mZ),
\end{align*}
where 
\begin{align}
	\label{opti}
	f(\mZ) :&= \frac{1}{2} \twonorm{\sum_{k=1}^{K}\calA_k\calG^\ast(\mZ_k - \mZ_{k,\natural})}^2 \notag\\
	&\qquad+ \frac{1}{2}\sum_{k=1}^{K}\fronorm{(\calI - \calG\calG^\ast)(\mZ_k)}^2,		
\end{align}

We employ the Riemannian Gradient Descent (RGD) method \cite{boumal2023introduction} for the problem \eqref{opti}, which is summarized in Algorithm \ref{alg: SGD}. The RGD algorithm generates a sequence of iterates using the following update rule:
\begin{align*}
	\mZ_{k, t+1} = \calR_k\lb \mZ_{k,t} - \eta_t  \nabla_{\calM_{k,r}} f(\mZ_{k,t})\rb,
\end{align*}
where $\eta_t$ represents the step size, $ \nabla_{\calM_{k,r}} f(\mZ_{k,t})$ is the Riemannian gradient at $\mZ_{k, t}$, and $\calR_k(\cdot)$ is the retraction operator. By adopting the canonical Riemannian metric and employing truncated Singular Value Decomposition (SVD) as the retraction, the RGD method becomes:
\begin{align*}
	\mZ_{k,t+1} = \calP_r\lb \mZ_{k, t} - \alpha_t \calP_{T_{k,t}} (\mG_{k,t})  \rb,
\end{align*}
where $\mG_{k,t}$ is given by
\begin{align}
	\label{gradient}
	\mG_{k,t} = \sum_{\ell=1}^{K} \calG\calA_k^\ast \calA_\ell\calGT(\mZ_{\ell, t} - \mZ_{\ell,\natural}) + (\calI - \calG\calGT)(\mZ_{k, t}),
\end{align}
and $\calP_{T_{k,t}}$ represents the projection onto the tangent space of the manifold $\calM_{k,r}$ at $\mZ_{k, t}$. Let $\mZ_{k,t} = \mU_{k,t}\bSigma_{k,t}\mV_{k,t}^\tranH$ be the compact SVD. The tangent space $T_{k,t}$ of $\calM_{k,r}$ at $\mZ_{k, t}$ is defined as follows:
\begin{align*}
	T_{k,t} =\{\mU_{k,t}\mA_{k,t}^\tranH + \mB_{k,t}\mV_{k,t}^\tranH: \mA_{k,t}\in\C^{n_2\times r}, \mB_{k,t}\in\C^{sn_1\times r} \}.
\end{align*}
 The projection $\calP_{T_{k,t}}(\mY_k)$ is defined as follows:
\begin{align*}
	\calP_{T_{k,t}}(\mY_k) = \mU_{k,t}\mU_{k,t}^\tranH \mY_k + \mY_k\mV_{k,t}\mV_{k,t}^\tranH - \mU_{k,t}\mU_{k,t}^\tranH \mY_k \mV_{k,t}\mV_{k,t}^\tranH.
\end{align*}

Indeed, the RGD algorithm can be regarded as a generalization of Fast Iterative Hard Thresholding (FIHT) in \cite{cai2015fast,wei2016guarantees,zhu2021blind} to the JBSD problem. Moreover, the RGD method can be eﬃciently implemented, where the main computational complexity in each step is $\calO(K(r^2sn+r^3+srn\log n))$.

\begin{algorithm}[ht!]
	\caption{RGD--JBSD}
	\label{alg: SGD}
	
	\For{$t=0,1,\cdots, T-1$}{
		\tcp{Fully parallel}
		\For{$k=1,\cdots, K$}{
			Compute the gradient $\mG_{k,t}$ via \eqref{gradient}\;
			Update on the tangent space: 
			$\mW_{k,t} = \mZ_{k, t} - \alpha_t \calP_{T_{k,t}}(\mG_{k,t})$\;
			Retaction:
			$\mZ_{k,t+1} =\calP_{r}(\mW_{k,t})$\;
		}
	}
	
\end{algorithm}

\section{Theoretical Results}
In this section, we present our primary result based on the following two assumptions.
\begin{assumption}[$\mu_0$-incoherence]
	\label{assumption 0}
	Suppose that the columns $\{\vb_{k,i}\}$ of $\mB_k^\tranH$ for $k=1,\cdots, K$, are i.i.d sampled from the a distribution $F$, which satisfies the following conditions for $\vb\sim F$:
	\begin{align*}
		\E{\vb} = \bzero,
		\E{\vb\vb^\tranH}=\mI,
		\max_{1\leq p\leq s}|\vb[p]| \leq \sqrt{\mu_0}. 
	\end{align*}
\end{assumption}
\begin{assumption}[$\mu_1$-incoherence]
	\label{assumption 1}
	Let $\mZ_{k,\natural} = \mU_{k,\natural} \bSigma_{k,\natural}\mV_{k,\natural}^\tranH$ be the singular value decomposition of $\mZ_{k,\natural}$, where $\mU_{k,\natural}\in\C^{sn_1\times r}, \mV_{k,\natural}\in\C^{n_2\times r}$. Let $\mU_{k,\natural, j} = \mU_{k,\natural}[js:(j+1)s-1,:]\in\C^{s\times r}$ be the $j$-th block of $\mU_{k,\natural}$ for $j=0,\cdots, n_1-1$. Suppose that for all $k=1,\cdots, K$, the matrix $\mZ_{k,\natural} $ obeys the following conditions:
	\begin{align*}
		\max_{0\leq j\leq n_1-1} \fronorm{\mU_{k,\natural, j} }^2 \leq \frac{\mu_1r}{n}\text{ and } \max_{0\leq \ell\leq n_2-1} \twonorm{\ve_\ell^\tran\mV_{k,\natural}}^2\leq \frac{\mu_1r}{n}
	\end{align*}
	for some positive constant $\mu_1$.
\end{assumption}
We are now prepared to formally state our main result.
\begin{theorem}
	\label{them}
	Assume that Assumptions \ref{assumption 0} and \ref{assumption 1} hold. If the number of measurements satisfies $n\geq C_{\gamma} K^2s^2r^2 \kappa^2\mu_0^2\mu_1 \log^2(sn)$, then with probability at least $1-(sn)^{-\gamma}$,  the iterations produced by Algorithm \ref{alg: SGD} satisfy
\begin{align}
	\label{main result}
\sum_{k=1}^{K}\fronorm{\mZ_k - \mZ_{k,\natural}}^2 \leq \frac{1}{2^t} \cdot \frac{\sigma_0^2}{K\mu_0s(1+\varepsilon)}
\end{align}
for $t=0,1,\cdots, T$, where $\sigma_0^2 = \sum_{k=1}^{K}\sigma_r^2(\mZ_{k, \natural})$ and $\kappa =\frac{\max_k \sigma_1(\mZ_{k, \natural})}{\min_k \sigma_r(\mZ_{k, \natural})}$.

\end{theorem}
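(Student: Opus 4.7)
The plan is to adapt the Riemannian iterative-hard-thresholding convergence framework (in the spirit of \cite{wei2016guarantees} and \cite{zhu2021blind}) to the multi-user demixing setting, proceeding by induction on the iteration index $t$. The induction hypothesis has two components: (i) the cumulative error $e_t^2 := \sum_k \fronorm{\mZ_{k,t}-\mZ_{k,\natural}}^2$ satisfies \eqref{main result}, and (ii) the tangent spaces $T_{k,t}$ remain close enough to $T_{k,\natural}$ for the incoherence of $\mZ_{k,\natural}$ to transfer to the current iterate. The base case $t=0$ needs a warm start (for instance, one step of hard thresholding on $\calG\calA_k^\ast \mD\vy$) placing the algorithm inside the basin of attraction, which is a standard spectral-initialization lemma proved under the same sample-complexity bound.

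Before the induction I would establish two uniform probabilistic estimates. Using matrix Bernstein on the randomness of $\{\mB_k\}$, combined with a covering argument over the rank-$r$ vectorized-Hankel manifold, I would show that, with probability at least $1-(sn)^{-\gamma}$ under $n\gtrsim K^2 s^2 r^2 \kappa^2 \mu_0^2\mu_1\log^2(sn)$, the following hold uniformly over tangent spaces $T_k$ close to $T_{k,\natural}$: a \emph{diagonal} local restricted isometry $\opnorm{\calP_{T_k}\bigl(\calG\calA_k^\ast \calA_k\calGT - \calG\calGT\bigr)\calP_{T_k}} \le \delta$, which generalizes the single-user bound of \cite{chen2022vectorized}; and an \emph{off-diagonal} cross-user bound $\opnorm{\calP_{T_k}\calG\calA_k^\ast \calA_\ell\calGT\calP_{T_\ell}} \le \delta/K$ for $k\ne\ell$, which exploits the mutual independence of the codebooks. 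Assumption~\ref{assumption 0} supplies the subgaussian-type tail control for Bernstein; Assumption~\ref{assumption 1} controls the $2,\infty$-norm of tangent projections of Hankel matrices, providing the per-summand Orlicz bounds. The $1/K$ required in the off-diagonal bound is precisely what produces the $K^2$ factor in the sample complexity.

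For the inductive step I would write $\mW_{k,t}=\mZ_{k,t}-\alpha_t\calP_{T_{k,t}}\mG_{k,t}$ and use $\mZ_{k,t}\in T_{k,t}$ to obtain
\begin{align*}
\mW_{k,t}-\mZ_{k,\natural}
&= \calP_{T_{k,t}}\bigl[(\mZ_{k,t}-\mZ_{k,\natural})-\alpha_t\mG_{k,t}\bigr] \\
&\quad -(\calI-\calP_{T_{k,t}})\mZ_{k,\natural}.
\end{align*}
Substituting \eqref{gradient} with $\alpha_t=1$, the tangent part decomposes into a diagonal contribution controlled by the diagonal RIP, $K-1$ cross-user contributions each controlled by the off-diagonal bound, and Hankel-violation terms that vanish on $\mZ_{k,\natural}$. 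The orthogonal-complement term is handled by the standard Wedin-type estimate $\fronorm{(\calI-\calP_{T_{k,t}})\mZ_{k,\natural}}\lesssim \fronorm{\mZ_{k,t}-\mZ_{k,\natural}}^2/\sigma_r(\mZ_{k,\natural})$, which is superlinearly small once (i)--(ii) are in place. Combined with $\fronorm{\mZ_{k,t+1}-\mZ_{k,\natural}}\le 2\fronorm{\mW_{k,t}-\mZ_{k,\natural}}$ (from $\calP_r$ being the best rank-$r$ approximation and $\mZ_{k,\natural}$ being rank $r$), squaring and summing over $k$ yields $e_{t+1}^2 \le \tfrac{1}{2} e_t^2$ provided $\delta$ is a small absolute constant.

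The hardest part will be two intertwined issues. First, the \emph{uniform} concentration across all nearby tangent spaces required for the diagonal RIP and the off-diagonal bound: a net over the rank-$r$ vectorized-Hankel manifold produces covering numbers that, combined with the $K(K-1)$ cross pairs, yield exactly the $K^2 s^2 r^2 \kappa^2\mu_0^2\mu_1\log^2(sn)$ scaling. Second, the preservation of the $\mu_1$-incoherence of $\mZ_{k,t}$ along the trajectory, without which the probabilistic bounds cannot be re-invoked at each step; I would address this by a leave-one-out / perturbation argument showing that $\max_j \fronorm{\mU_{k,t,j}}$ drifts by at most $O(\fronorm{\mZ_{k,t}-\mZ_{k,\natural}}/\sigma_r(\mZ_{k,\natural}))$ per iteration, so that by the geometric decrease of $e_t$ the incoherence parameter remains within a bounded multiple of $\mu_1$ for all $t$.
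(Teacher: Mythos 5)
Your skeleton matches the paper's: both proofs proceed by induction, both invoke a prior initialization lemma for $t=0$, both use the decomposition $\mW_{k,t}-\mZ_{k,\natural}=\calP_{T_{k,t}}\lsb(\mZ_{k,t}-\mZ_{k,\natural})-\alpha_t\mG_{k,t}\rsb-(\calI-\calP_{T_{k,t}})\mZ_{k,\natural}$ with the Wedin-type bound $\fronorm{(\calI-\calP_{T_{k,t}})\mZ_{k,\natural}}\lesssim\fronorm{\mZ_{k,t}-\mZ_{k,\natural}}^2/\sigma_r(\mZ_{k,\natural})$ for the normal component, and both close with the factor-of-two from $\calP_r$ being a best rank-$r$ approximation. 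Where you genuinely diverge is in how the restricted-isometry control is made available at the \emph{current} tangent space. You propose a uniform concentration bound over all tangent spaces near $T_{k,\natural}$ (via matrix Bernstein plus a covering of the rank-$r$ vectorized-Hankel manifold, split into a diagonal RIP and an off-diagonal cross-user bound of order $\delta/K$), which then forces you to track the $\mu_1$-incoherence of the iterates along the trajectory via a leave-one-out argument. The paper instead needs concentration only at the single fixed product tangent space of the ground truth, $\opnorm{\calP_{\calS}\tG(\calA^\ast\calA-\calI)\tG^\ast\calP_{\calS}}\leq\epsilon$ (cited from prior work, with the cross-user terms absorbed into the block operator), and transfers it to $\calP_{\calS_t}$ \emph{deterministically} through the perturbation bound $\opnorm{\calP_{\calS_t}-\calP_{\calS}}\leq 2\max_k\fronorm{\mZ_{k,t}-\mZ_{k,\natural}}/\sigma_r(\mZ_{k,\natural})$, which the induction hypothesis already controls. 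Consequently the two issues you single out as hardest --- uniformity over a net of tangent spaces and incoherence propagation of the iterates --- are artifacts of your route and simply do not arise in the paper's argument; your approach could likely be pushed through, but at the cost of substantially heavier machinery (and a risk that the covering entropy degrades the sample complexity unless the incoherence propagation is done carefully), whereas the fixed-tangent-space-plus-perturbation device buys the same conclusion with an elementary deterministic step.
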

\begin{remark}
	Theorem \ref{them} establishes that RGD converges to the target matrices at a linear rate. Furthermore, the convergence rate is independent of the condition number of the target matrix, underscoring the efficiency of the proposed method.
\end{remark}

\section{Numerical Experiments}
In this section, we assess the performance of the proposed method and compare it to the Scaled Gradient Descent (Scaled--GD) method \cite{chen2024fast}. All numerical experiments were conducted using MATLAB R2022b on a macOS system equipped with a multi-core Intel CPU running at 2.3 GHz and 16 GB of RAM.

In our experiments, the data matrix $\mX_{k,\natural}$ is constructed as $\mX_{k,\natural} = \sum_{p=1}^{r}d_{k,p}\vh_k \va_{\tau_{k,p}}^\tran$. The amplitudes $\{d_{k,p}\}$ are generated in the form $(1+10^{c_{k,p}})e^{-\imath\varphi_{k,p}}$, where $c_{k,p}$ is uniformly sampled from $[0,1]$ and $\varphi_{k,p}$ is uniformly distributed over $[0,2\pi)$. The coefficient vector $\vh_k$ is a standard Gaussian random vector, subsequently normalized. For data matrices without frequency separation, 
the time delay parameters $\{\tau_{k,p}\}$ are uniformly sampled from $[0,1)$. For data matrices with frequency separation, $\{\tau_{k,p}\}_{p=1,\cdots, r}$ are sampled uniformly from $[0,1)$, ensuring the minimum separation satisfies $\min_{p_1\neq p_2}|\tau_{k,p_1}  - \tau_{k,p_2}| \geq \frac{1}{n}$. Additionally, the subspace matrices $\{\mB_{k}\}$ are i.i.d random matrices with entries uniformly sampled from $[-\sqrt{3}, \sqrt{3}]$.  We conduct $20$ Monte Carlo trials and consider the recovery successful if the relative error satisfies the condition
\begin{align*}
	\sqrt{\frac{\sum_{k=1}^{K}\fronorm{\hat{\mX}_k - \mX_{k,\natural}}^2}{ \sum_{k=1}^{K}\fronorm{ \mX_{k,\natural}}^2 } } \leq 10^{-3}.
\end{align*}
The algorithms are terminated when the relative error falls below $10^{-4}$ or the number of iterations exceeds $2000$.

In the first experiment, we examine the recovery performance of RGD in comparison to GD and Scaled--GD using the empirical phase transition framework. We set $n=160, s=K=2$ and vary $r$ in the range $\{2:1:8\}$. Fig \ref{fig: empir prob} presents the phase transition plots both with and without imposing the separation condition. The results indicate that RGD is more robust to the frequency separation condition and exhibits a higher phase transition threshold compared to the GD method.
\begin{figure}[htbp]
	\centering
	\subfloat[]{
		\centering
		\includegraphics[width=0.23\textwidth]{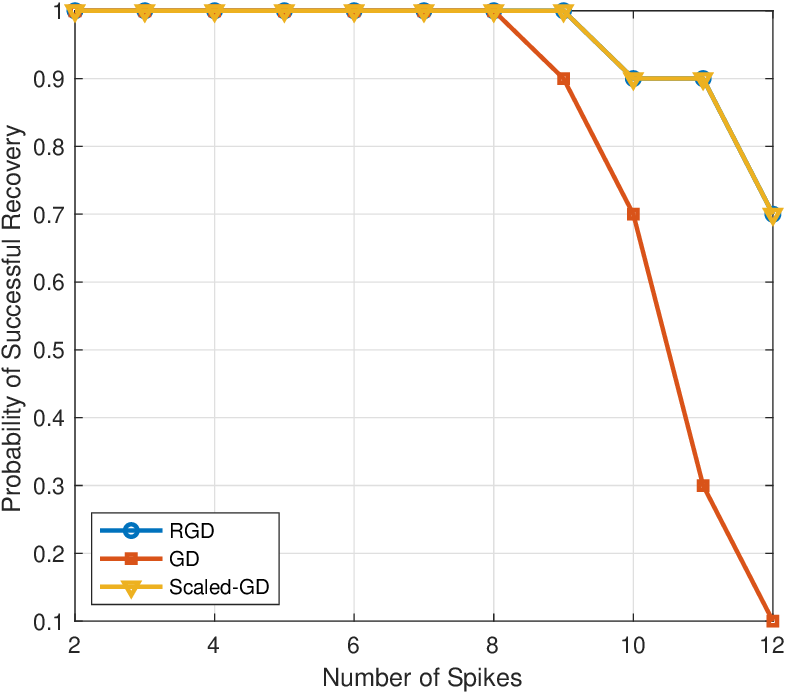}
	}
	\subfloat[]{
		\centering
		\includegraphics[width=0.23\textwidth]{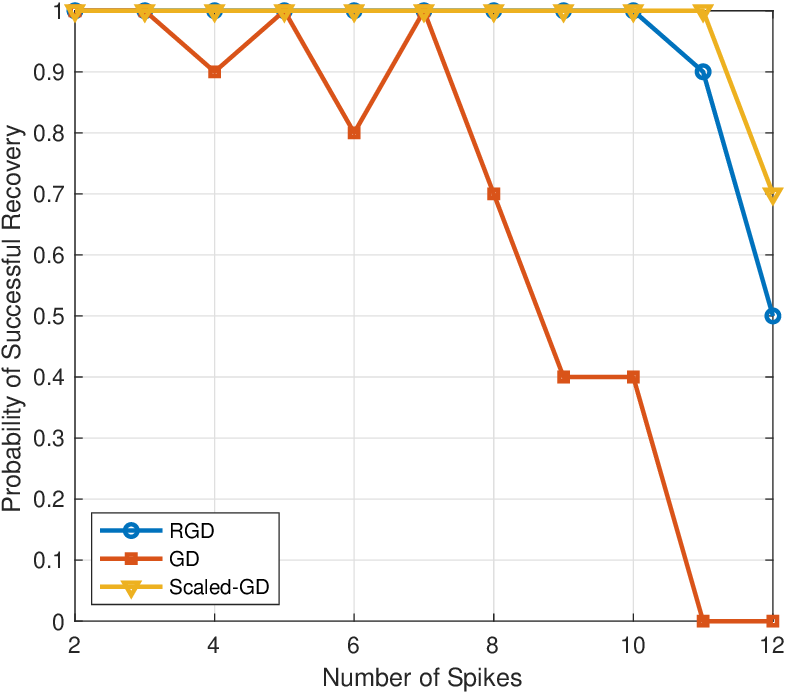}
	}
	\caption{The Empirical probability of successful recovery for RGD, GD and Scaled-GD (a) with frequency separation (b) without frequency separation.}
	\label{fig: empir prob}
\end{figure}

In the second experiment, we compare the running times of RGD, GD, and Scaled-GD. For this comparison, we fix the parameters $s=r=K=2$ and vary $n$ within the range 
$\{160:20:300\}$. We report the computational times for RGD, GD, and Scaled-GD across different values of $n$. The average computational times for all three methods are shown in Figure \ref{fig: running time}, both with and without the separation condition. The results clearly demonstrate that RGD significantly reduces running time compared to GD and Scaled-GD, particularly for larger $n$ is large, highlighting the superior efficiency of RGD.

\begin{figure}[htpb!]
	\centering
	\subfloat[]{
		\centering
		\includegraphics[width=0.23\textwidth]{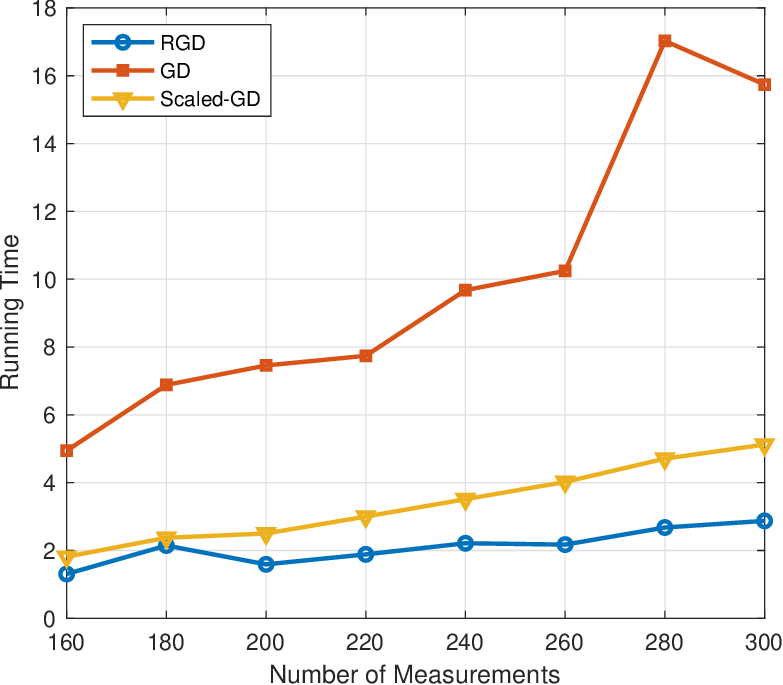}
	}
	\subfloat[]{
		\centering
		\includegraphics[width=0.23\textwidth]{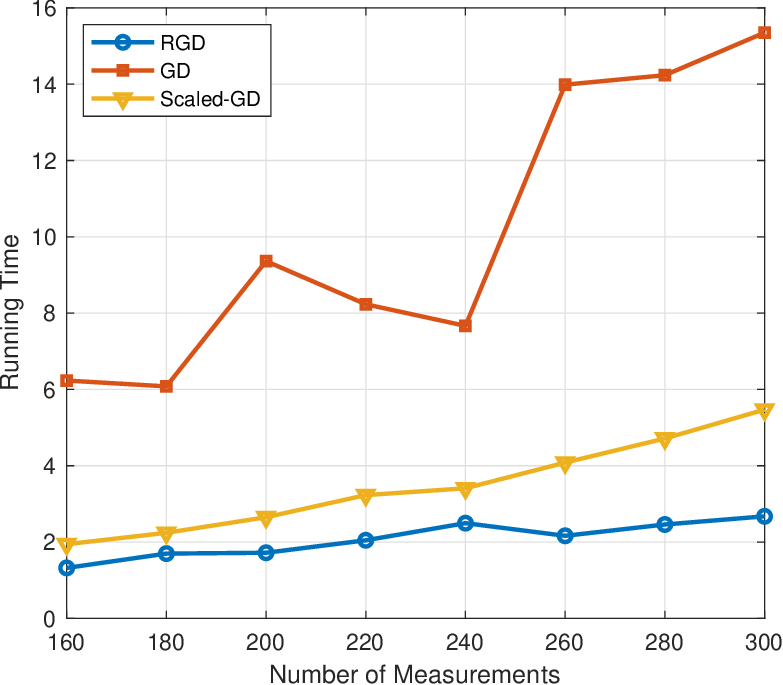}
	}
	
	\caption{The CPU running time for RGD, GD and Scaled-GD.}
	\label{fig: running time}
\end{figure}

In the third experiment, we evaluate the convergence performance of RGD in comparison to GD and Saled-GD.  For this comparison, we select $n=160, s=r=K=2$ and $n=256, s=r=4, K=2$. Fig \ref{fig: rate} shows the relative recovery error as a function of iterations for different condition numbers $\kappa = 1,5,10$. The results indicate that RGD achieves linear convergence, independent of the condition number, aligning with the predictions of our main theorem.
\begin{figure}[htpb]
	\centering
	\subfloat[]{
		\centering
		\includegraphics[width=0.23\textwidth]{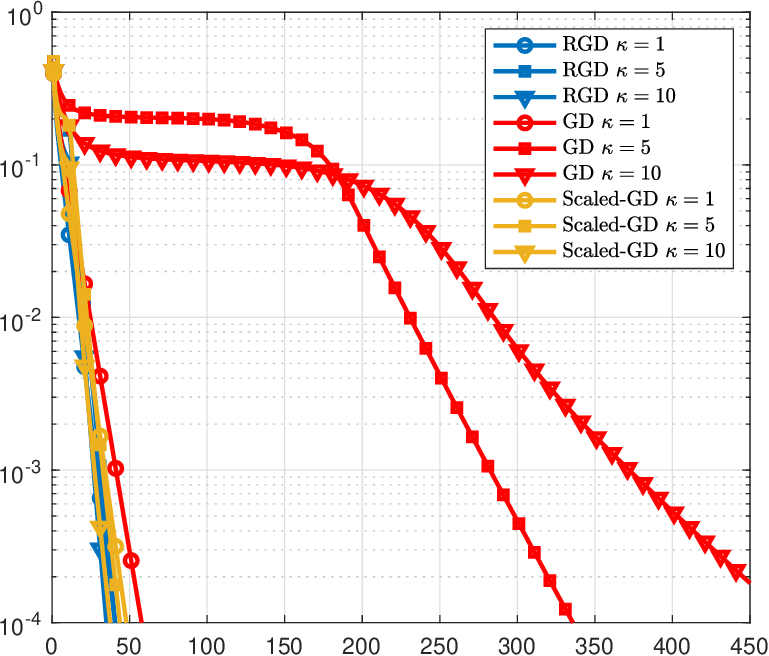}
	}%
	\subfloat[]{
		\centering
		\includegraphics[width=0.23\textwidth]{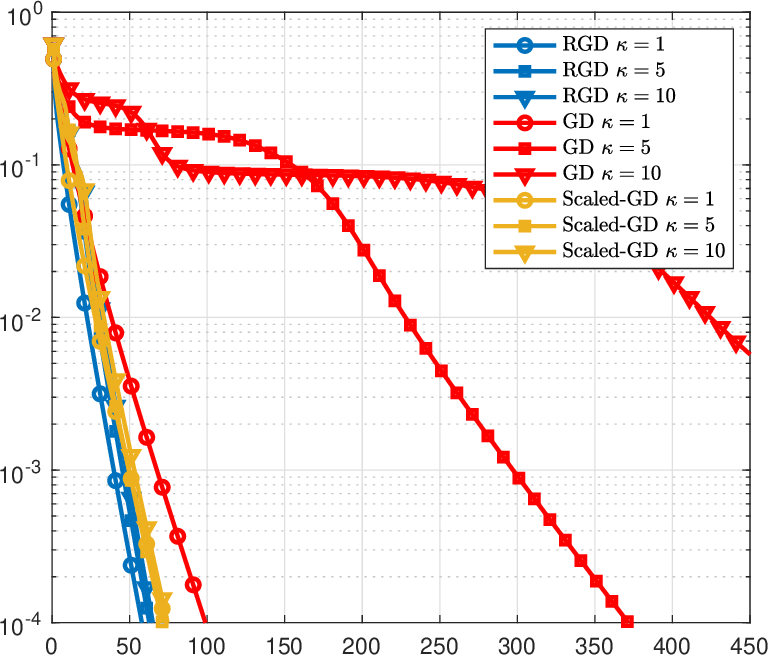}
	}
	
	\caption{Convergence rate (a) $n=160, s=r=K=2$, (b) $n=256, s=r=4, K=2$}
	\label{fig: rate}
\end{figure}


 \section{Proof of Main Result}
We will prove our main result by induction. Notice that Lemma V.7 in \cite{chen2024fast} guarantees that \eqref{main result} holds when $t=0$. Next we assume \eqref{main result} holds for the iterations $0,1,...,t$, and then prove it also holds for $t+1$.

Recall that $\mZ_t$ is a block diagonal matrix. A direct computation yields that 
 \begin{align*}
 	\fronorm{\mZ_{t+1} - \mZ_{\natural} } &\leq \fronorm{\mZ_{t+1} - \mW_t} + \fronorm{\mW_t-  \mZ_{\natural} }\\
 	&\leq 2\fronorm{\mW_t-  \mZ_{\natural} }.
 \end{align*}
 Moreover, one can express $\mW_t-  \mZ_{\natural}$ as follows:
 \begin{align*}
 	\mW_t-  \mZ_{\natural} 
 	&=  \calP_{\calS_t}\lb \mZ_{t} - \alpha_t \nabla f(\mZ_t) \rb - \mZ_{\natural}\\
 	&=(1-\alpha_t) \calP_{\calS_t}\lb \mZ_{t} - \mZ_{\natural} \rb \\
 	&\qquad+  \alpha_t  \calP_{\calS_t}\tG(\calI - \calA^\ast \calA)\tG^\ast\calP_{\calS_t}(\mZ_t - \mZ_{\natural}) \\
 	&\qquad + \alpha_t  \calP_{\calS_t}\tG\tG^\ast\lb \calI - \calP_{\calS_t}\rb(\mZ_t - \mZ_{\natural}) \\
 	&\qquad- \alpha_t  \calP_{\calS_t}\tG \calA^\ast \calA \tG^\ast\lb \calI - \calP_{\calS_t}\rb(\mZ_t - \mZ_{\natural}) \\
 	&\qquad + (\calI - \calP_{\calS_t})\lb \mZ_t - \mZ_{\natural} \rb,
 \end{align*}
 which implies that
 \begin{align*}
 	\fronorm{\mW_t-  \mZ_{\natural} }
 	\leq &(1-\alpha_t) \fronorm{ \calP_{\calS_t}\lb \mZ_{t} - \mZ_{\natural} \rb } \\
 	&+ \underbrace{ \fronorm{(\calI - \calP_{\calS_t})\lb \mZ_t - \mZ_{\natural} \rb} }_{:=I_1}\\
 	&+ \alpha_t\underbrace{ \fronorm{ \calP_{\calS_t}\tG \calA^\ast \calA \tG^\ast\lb \calI - \calP_{\calS_t}\rb(\mZ_t - \mZ_{\natural}) } }_{:=I_2}\\
 	&+\alpha_t \underbrace{ \fronorm{ \calP_{\calS_t}\tG\tG^\ast\lb \calI - \calP_{\calS_t}\rb(\mZ_t - \mZ_{\natural})  } }_{:=I_3}\\
 	&+ \alpha_t \underbrace{ \fronorm{ \calP_{\calS_t}\tG(\calI - \calA^\ast \calA)\tG^\ast\calP_{\calS_t}(\mZ_t - \mZ_{\natural}) } }_{;=I_4}.
 \end{align*}
To this end, we bound these four terms, respectively. 
\addtolength{\topmargin}{0.05in}
 \begin{itemize}
 	\item Bounding of $I_1$.
 	A direct computation yields that 
 	{\small
 	\begin{align*}
 		I_1
 		&=\sqrt{\sum_{k=1}^{K}\fronorm{(\calI - \calP_{\calS_{k,t}})\lb\mZ_{k, \natural} \rb}^2}\\
 		&\leq \sqrt{\sum_{k=1}^{K}\frac{1}{\sigma_r^2(\mZ_{k,\natural})} \fronorm{\mZ_{k,t} - \mZ_{k,\natural}}^4 }\\
 		&\leq \max_k \frac{ \fronorm{\mZ_{k,t} - \mZ_{k,\natural}} }{\sigma_r(\mZ_{k,\natural})} \cdot \sqrt{\sum_{k=1}^{K}   \fronorm{\mZ_{k,t} - \mZ_{k,\natural}}^2 }\\
 		&\leq  \frac{\epsilon}{2\sqrt{K\mu_0s (1+\epsilon)}} \cdot \fronorm{\mZ_t - \mZ_{\natural}}.
 	\end{align*}}
 	\item Bounding of $I_2$. A simple calculation yields that
 	\begin{align*}
 		I_2 &\leq \opnorm{\calP_{\calS_t}\tG \calA^\ast} \cdot \opnorm{\calA \tG^\ast}\cdot \fronorm{ \lb \calI - \calP_{\calS_t}\rb(\mZ_t - \mZ_{\natural})  } \\
 		&\leq 3\sqrt{1+\epsilon} \cdot \sqrt{K\mu_0s} \cdot  \frac{\epsilon}{2\sqrt{K\mu_0s (1+\epsilon)}} \cdot \fronorm{\mZ_t - \mZ_{\natural}}\\
 		&=\frac{3\epsilon}{2}\fronorm{\mZ_t - \mZ_{\natural}}.
 	\end{align*}
 	\item Bounding of $I_3$. One has
 	\begin{align*}
 		I_3
 		&\leq \fronorm{ \lb \calI - \calP_{\calS_t}\rb(\mZ_t - \mZ_{\natural}) }\\
 		&\leq\frac{\epsilon}{2\sqrt{K\mu_0s (1+\epsilon)}} \cdot \fronorm{\mZ_t - \mZ_{\natural}}.
 	\end{align*}
 	\item Bounding of $I_4$. Applying Lemma yields that
 	\begin{align*}
 		I_4\leq 21\epsilon \fronorm{\mZ_t - \mZ_{\natural}}
 	\end{align*}
 \end{itemize}
 Combining together, one has
 \begin{align*}
 	\fronorm{\mZ_{t+1} - \mZ_{\natural} } 
 	&\leq 2  (1-\alpha_t)  \fronorm{\mZ_t - \mZ_{\natural}}\\
 	&\quad  + 2\frac{\epsilon}{2\sqrt{K\mu_0s (1+\epsilon)}} \fronorm{\mZ_t - \mZ_{\natural}}  \\
 	&\quad + 2\alpha_t \frac{3\epsilon}{2}\fronorm{\mZ_t - \mZ_{\natural}} \\
 	&\quad + 2\alpha \frac{\epsilon}{2\sqrt{K\mu_0s(1+\epsilon)}}\fronorm{\mZ_t - \mZ_{\natural}} \\
 	&\quad + 2\alpha_t \cdot 21\epsilon\fronorm{\mZ_t - \mZ_{\natural}} \\
 	&\leq  \lb 2(1-\alpha_t) +50\alpha_t \epsilon  \rb \fronorm{\mZ_t - \mZ_{\natural}}\\
 	&\leq \frac{1}{2}\fronorm{\mZ_t - \mZ_{\natural}},
 \end{align*}
 where the last line is due to $1 \geq \alpha_t \geq \frac{7}{8}$. Thus we complete the proof.
 
 \subsection{Useful Lemmas}
 \begin{lemma}\cite[Lemma VII.3]{chen2024fast}
 	\label{lemma: T-RIP}
 	Suppose $n\geq C_{\gamma} \epsilon^{-2}K^2\mu_0 s\mu_1r\log(sn)$. Then with probability at least $1-(sn)^{-\gamma+1}$, there holds the following inequality
 	\begin{align*}
 		\opnorm{\calP_T\tG(\calA^\ast \calA - \calI)\tG^\ast \calP_T} \leq \epsilon .
 	\end{align*}
 \end{lemma}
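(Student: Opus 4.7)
The plan is to establish this tangent-space restricted isometry property by expressing the centered random operator $\calP_T\tG(\calA^\ast\calA - \calI)\tG^\ast\calP_T$ as a sum of $n$ independent zero-mean random self-adjoint operators on the (finite-dimensional) tangent space $T = T_1 \times \cdots \times T_K$, and then invoking the non-commutative matrix Bernstein inequality. First I would decompose the operator by indexing over the measurement coordinate $j$. Writing $\calA^\ast\calA$ acting on a block-diagonal $(\mX_1,\ldots,\mX_K)$ as $\sum_{j=1}^{n}\calY_j$, where the $k$-th component of $\calY_j(\mX_1,\ldots,\mX_K)$ is $\sum_{\ell=1}^{K}(\vb_{\ell,j}^\tranH \mX_\ell \ve_j)\,\vb_{k,j}\ve_j^\tran$, the independence and the normalization $\E{\vb_{k,j}\vb_{\ell,j}^\tranH}=\delta_{k\ell}\mI_s$ from Assumption \ref{assumption 0} give $\E{\calY_j}(\mX_k)=\ve_j\ve_j^\tran\mX_k$ on the $k$-th block, so $\sum_j \E{\calY_j} = \calI$. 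Hence $\calF_j := \calP_T\tG(\calY_j - \E{\calY_j})\tG^\ast\calP_T$ are independent, mean-zero, self-adjoint operators on $T$, and the target quantity is exactly $\|\sum_{j=1}^n \calF_j\|_{\mathrm{op}}$.

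Next I would compute the two Bernstein parameters. For the uniform bound $L\ge\max_j\|\calF_j\|_{\mathrm{op}}$, Assumption \ref{assumption 0} gives the pointwise bound $\max_p |\vb_{k,j}[p]|\le\sqrt{\mu_0}$, hence $\twonorm{\vb_{k,j}}^2\le \mu_0 s$. Combined with the fact that $\tG^\ast\calP_{T_k}$ maps onto rank-$2r$ matrices supported on the $\mu_1$-incoherent singular subspaces $(\mU_{k,\natural},\mV_{k,\natural})$, and that the Hankel lift couples the $j$-th column of $\mX_k$ to at most $\min(j{+}1,n_1,n{-}j,n_2)$ antidiagonal blocks with weights $w_i\asymp\min(i,n{-}i)$, the $\mu_1$-incoherence bounds $\fronorm{\mU_{k,\natural,i}}^2\le \mu_1 r/n$ and $\twonorm{\ve_\ell^\tran\mV_{k,\natural}}^2\le \mu_1 r/n$ yield $L\lesssim K\mu_0\mu_1 sr$. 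For the matrix variance $\sigma^2 \ge \|\sum_j\E{\calF_j^2}\|_{\mathrm{op}}$, I would expand the fourth-moment expression, use the higher-moment bounds on $\vb_{k,j}$ available from boundedness, and telescope $\sum_{j}\twonorm{\ve_j^\tran\mV_{k,\natural}}^2 = r$ together with $\sum_i \fronorm{\mU_{k,\natural,i}}^2 = r$ to obtain $\sigma^2 \lesssim K^2\mu_0\mu_1 s r$.

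Plugging these into matrix Bernstein with effective dimension $\dim(T)\lesssim K r(sn_1+n_2)\le Krsn$ produces, via the standard Bernstein tail, a deviation bound of the form $\Pr{\|\sum_j\calF_j\|_{\mathrm{op}}\ge \epsilon}\le (sn)^{-\gamma+1}$ as soon as both $L\log(sn)/\epsilon \lesssim 1$ and $\sqrt{\sigma^2\log(sn)}/\epsilon\lesssim 1$; the dominant constraint is the variance one, and after rescaling by $1/n$ (the summands carry an implicit $1/n$ normalization once $\E{\calY_j}$ is subtracted) it gives the claimed sample complexity $n\ge C_\gamma \epsilon^{-2} K^2\mu_0\mu_1 sr \log(sn)$.

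The main obstacle will be the variance estimate. The Hankel adjoint $\tG^\ast$ redistributes mass across many entries of $\mX_k$, and when composed with $\calP_T$ and sandwiched by a rank-one term $\vb_{k,j}\ve_j^\tran$ it produces an operator whose Frobenius norm one must control through an interaction between the Hankel weights $w_i$ and the two incoherence parameters $\mu_0,\mu_1$. In addition, the cross-user pairs ($k\neq\ell$) contribute $K(K-1)$ independent mean-zero covariance terms whose accumulated operator norm is responsible for the $K^2$ factor in the final bound; aggregating them cleanly likely requires a decoupling step (e.g., a Hanson--Wright-style bound for bilinear forms in independent subgaussian-like vectors) before the matrix Bernstein step can be invoked. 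Once these two technical points are handled, the remainder of the argument is routine bookkeeping.
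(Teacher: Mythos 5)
The paper does not actually prove this lemma: it is imported verbatim from \cite[Lemma~VII.3]{chen2024fast} and used as a black box, so there is no in-paper argument to compare yours against. That said, your strategy --- decompose $\calA^\ast\calA$ over the measurement index $j$, center each summand, and apply matrix Bernstein to the independent self-adjoint operators $\calP_T\tG(\calY_j-\E{\calY_j})\tG^\ast\calP_T$ --- is the standard route for tangent-space RIP results in the vectorized-Hankel-lift literature and is almost certainly the skeleton of the cited proof. Your identification of $\E{\calY_j}$ via $\E{\vb_{k,j}\vb_{\ell,j}^\tranH}=\delta_{k\ell}\mI_s$ and of the effective dimension $\dim(T)\lesssim Krsn$ are both correct.

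The gaps are concentrated exactly where you defer the work. First, the uniform bound $L$ and the variance $\sigma^2$ are asserted rather than derived, and the parenthetical claim that the summands ``carry an implicit $1/n$ normalization once $\E{\calY_j}$ is subtracted'' is not right: centering does not produce the $1/n$. The $1/n$ comes from the incoherence of the ground-truth factors ($\fronorm{\mU_{k,\natural,i}}^2\leq \mu_1 r/n$, $\twonorm{\ve_\ell^\tran\mV_{k,\natural}}^2\leq \mu_1 r/n$) interacting with the $\calD^{-1}$ weights $1/\sqrt{w_i}$ inside $\calG=\calH\calD^{-1}$; making that interaction precise is the entire content of the lemma, and without it you cannot confirm that the final scaling is $K^2\mu_0 s\mu_1 r$ rather than, say, $K^2\mu_0^2\mu_1^2 s^2 r^2$. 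Second, the proposed decoupling/Hanson--Wright step for the cross-user terms is a detour that signals uncertainty about where the $K^2$ arises: in the standard argument the $K$ users are coupled only through the single inner product $\sum_{\ell}\langle\vb_{\ell,j}\ve_j^\tran,\mX_\ell\rangle$ inside each $\calY_j$, the per-summand operator norm picks up one factor of $K$ by Cauchy--Schwarz across users, the variance picks up $K^2$, and no separate decoupling inequality is needed before Bernstein. As written, the proposal is a credible plan but not yet a proof.
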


 \begin{lemma}
 	Suppose that then
 	\begin{align*}
 		\max_k \frac{\fronorm{\mZ_{k,t}- \mZ_{k, \natural}} }{\sigma_r(\mZ_{k, \natural})}\leq \frac{\epsilon}{\sqrt{K\mu_0s(1+\epsilon)}}
 	\end{align*}
 	Conditioned on Lemma \ref{lemma: T-RIP}, one has
 	\begin{align*}
 		\opnorm{\calA\tG^\ast \calP_{\calS_t}} &\leq 3\sqrt{1+\epsilon},\\
 		\opnorm{ \calP_{\calS_t}\tG(\calI - \calA^\ast \calA)\tG^\ast\calP_{\calS_t}} &\leq 21\epsilon.
 	\end{align*}
 \end{lemma}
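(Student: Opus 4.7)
The goal is to transfer the two operator-norm bounds from the tangent space $T$ at the ground truth (where Lemma~\ref{lemma: T-RIP} applies) to the tangent space $\calS_t$ at the current iterate $\mZ_t$, using the hypothesis that $\mZ_t$ is close to $\mZ_{\natural}$. As a preparatory step, I would establish the tangent-space perturbation bound
$$\opnorm{\calP_{\calS_{k,t}} - \calP_{T_k}} \leq \frac{2\fronorm{\mZ_{k,t} - \mZ_{k,\natural}}}{\sigma_r(\mZ_{k,\natural})},$$
which is a standard consequence of the Wedin $\sin\Theta$ theorem applied to the singular subspaces of $\mZ_{k,t}$ and $\mZ_{k,\natural}$. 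Writing $\Delta := \calP_{\calS_t} - \calP_T$ for the block-diagonal difference and invoking the hypothesis, this yields $\opnorm{\Delta} \leq \tfrac{2\epsilon}{\sqrt{K\mu_0 s(1+\epsilon)}}$.

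I would then prove the second (RIP-type) bound first, and derive the first bound from it. Writing $M := \tG(\calI - \calA^\ast\calA)\tG^\ast$ and expanding $\calP_{\calS_t} = \calP_T + \Delta$,
$$\calP_{\calS_t} M \calP_{\calS_t} = \calP_T M \calP_T + \calP_T M \Delta + \Delta M \calP_T + \Delta M \Delta.$$
Lemma~\ref{lemma: T-RIP} controls the first term by $\epsilon$; each of the remaining three carries at least one factor of $\Delta$. The auxiliary bound on $\opnorm{M\calP_T}$ needed to estimate them is obtained by writing $M\calP_T = \tG\tG^\ast\calP_T - \tG\calA^\ast\calA\tG^\ast\calP_T$, invoking $\opnorm{\tG\tG^\ast}\leq 1$ (since $\calG$ is an isometry via the weighting by $\calD$, so $\tG\tG^\ast$ is an orthogonal projection), and using the consequence $\opnorm{\calA\tG^\ast\calP_T}^2 = \opnorm{\calP_T\tG\calA^\ast\calA\tG^\ast\calP_T} \leq 1 + \epsilon$ of Lemma~\ref{lemma: T-RIP}. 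Collecting the resulting constants delivers the target $21\epsilon$ bound. The first inequality then follows immediately from
$$\opnorm{\calA\tG^\ast\calP_{\calS_t}}^2 = \opnorm{\calP_{\calS_t}\tG\calA^\ast\calA\tG^\ast\calP_{\calS_t}} \leq \opnorm{\calP_{\calS_t}\tG\tG^\ast\calP_{\calS_t}} + \opnorm{\calP_{\calS_t} M \calP_{\calS_t}} \leq 1 + 21\epsilon,$$
since $\sqrt{1+21\epsilon} \leq 3\sqrt{1+\epsilon}$ whenever $\epsilon \leq 2/3$.

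The main obstacle is controlling the asymmetric cross-terms $\calP_T M \Delta$ and $\Delta M \calP_T$: Lemma~\ref{lemma: T-RIP} supplies only the symmetric bound $\opnorm{\calP_T M \calP_T} \leq \epsilon$, not the one-sided norm $\opnorm{M\calP_T}$, so a naive triangle inequality does not immediately yield an $O(\epsilon)$ estimate of the required strength. To close this gap, I would exploit that $\mathrm{range}(\Delta) \subseteq \calS_t + T$ and that elements of this combined subspace have rank at most $4r$. This permits re-running the covering/chaining argument underlying Lemma~\ref{lemma: T-RIP} over the enlarged tangent space, at the cost of a worse absolute constant but the same sample complexity up to constants, and hence $\opnorm{\calP_T^{\perp} M \calP_T} = O(\epsilon)$ as well. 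Combined with the smallness of $\opnorm{\Delta}$ from Step 1, this absorbs each cross- and pure-$\Delta$ term into the overall $21\epsilon$ bound and completes the proof.
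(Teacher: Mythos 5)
Your setup is right --- the perturbation bound $\opnorm{\calP_{\calS_{k,t}} - \calP_{\calS_k}} \leq 2\fronorm{\mZ_{k,t}-\mZ_{k,\natural}}/\sigma_r(\mZ_{k,\natural})$ and the expansion of $\calP_{\calS_t} M \calP_{\calS_t}$ around the ground-truth tangent space are exactly what the paper does --- but your resolution of the cross terms has a genuine gap. You propose to handle $\calP_T M \Delta$ and $\Delta M \calP_T$ by ``re-running the covering/chaining argument underlying Lemma~\ref{lemma: T-RIP} over the enlarged tangent space $\calS_t + T$.'' That is not licensed: Lemma~\ref{lemma: T-RIP} is a statement about the \emph{fixed} tangent space at the ground truth, proved by concentration for that one subspace, whereas $\calS_t$ is determined by the iterate $\mZ_t$ and hence by the random measurement operator itself. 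Extending the RIP to $\calS_t + T$ would require a \emph{uniform} restricted isometry over all candidate tangent spaces (an $\epsilon$-net over rank-$4r$ structured matrices), which is a substantially different and stronger statement than the one available, and your claim $\opnorm{\calP_T^{\perp} M \calP_T}=O(\epsilon)$ is not obtainable this way at all, since $\calP_T^{\perp}$ is not a low-dimensional restriction.

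The fix is more elementary, and it is the reason the hypothesis of the lemma carries the factor $\sqrt{K\mu_0 s(1+\epsilon)}$ in its denominator. The incoherence assumption gives the crude but global bound $\opnorm{\calA\tG^\ast} \leq \sqrt{K\mu_0 s}$ (each measurement matrix $\vb_{k,j}\ve_j^\tran$ has Frobenius norm at most $\sqrt{\mu_0 s}$), so every cross term is controlled by plain submultiplicativity:
\begin{align*}
\opnorm{\Delta\,\tG\calA^\ast\calA\tG^\ast\calP_{\calS}} \;\leq\; \opnorm{\Delta}\cdot\opnorm{\tG\calA^\ast}\cdot\opnorm{\calA\tG^\ast\calP_{\calS}} \;\leq\; \frac{2\epsilon}{\sqrt{K\mu_0 s(1+\epsilon)}}\cdot\sqrt{K\mu_0 s}\cdot\sqrt{1+\epsilon} \;=\; 2\epsilon,
\end{align*}
where $\opnorm{\calA\tG^\ast\calP_{\calS}}\leq\sqrt{1+\epsilon}$ follows from Lemma~\ref{lemma: T-RIP} exactly as in your last display. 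The smallness of $\opnorm{\Delta}$ is calibrated to cancel the $\sqrt{K\mu_0 s}$, so no enlarged-subspace RIP is needed; this is also why the paper proves $\opnorm{\calA\tG^\ast\calP_{\calS_t}}\leq 3\sqrt{1+\epsilon}$ \emph{first} (via $\opnorm{\calA\tG^\ast\calP_{\calS_t}}\leq\opnorm{\calA\tG^\ast}\opnorm{\Delta}+\opnorm{\calA\tG^\ast\calP_{\calS}}$) and then feeds it into the $21\epsilon$ estimate, rather than deriving it afterwards as you do.
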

 \begin{proof}
 	For any block diagonal matrix $\mY$ such that $\fronorm{\mY}=1$, one has
 	\begin{align*}
 	&\fronorm{\calA\tG^\ast \calP_{\calS}(\mY)}^2 \\
 		&= \la \mY, \lb \calP_{\calS}\tG\calA^\ast\calA\tG^\ast \calP_{\calS} - \calP_{\calS}\rb(\mY)\ra + \fronorm{\calP_{\calS}(\mY)}^2\\
 		&\leq \opnorm{ \calP_{\calS}\tG\calA^\ast\calA\tG^\ast \calP_{\calS} - \calP_{\calS} }\cdot \fronorm{\mY}^2 + 1\\
 		&\leq 1+\epsilon.
 	\end{align*}
 	Furthermore, one has
 	\begin{align*}
 		\fronorm{ \lb \calP_{\calS_t} - \calP_{\calS} \rb(\mY) }
 		&\leq \sqrt{ \sum_{k=1}^{K} \opnorm{ \calP_{\calS_{k,t}} - \calP_{\calS_k} }^2 \cdot \fronorm{\mY_k}^2 } \\
 		&\leq \sqrt{ \sum_{k=1}^{K} \lb \frac{2\fronorm{\mZ_{k,t} - \mZ_{k, \natural}}}{\sigma_r(\mZ_{k, \natural})}\rb^2 \cdot \fronorm{\mY_k}^2 } \\
 		&\leq \max_k \frac{2\fronorm{\mZ_{k,t} - \mZ_{k, \natural}}}{\sigma_r(\mZ_{k, \natural})} \\
 		&\leq \frac{2\epsilon}{\sqrt{K\mu_0s (1+\epsilon)}},
 	\end{align*}
 	which implies that $\opnorm{ \calP_{\calS_t} - \calP_{\calS} } \leq \frac{2\epsilon}{\sqrt{K\mu_0s (1+\epsilon)}}$.
 	A simple computation yields that 
 	\begin{align*}
 		\fronorm{\calA\tG^\ast \calP_{\calS_t}(\mY)}&\leq \fronorm{\calA\tG^\ast \lb \calP_{\calS_t} - \calP_{\calS} \rb(\mY)} \\
 		&\quad + \fronorm{\calA\tG^\ast \lb\calP_{\calS} \rb(\mY)} \\
 		&\leq \opnorm{\calA\tG^\ast} \cdot \fronorm{ \lb \calP_{\calS_t} - \calP_{\calS} \rb(\mY) } + \sqrt{1+\epsilon}\\
 		&\leq \sqrt{K\mu_0s} \cdot \frac{2\epsilon}{\sqrt{K\mu_0s (1+\epsilon)}}.  + \sqrt{1+\epsilon}\\
 		&\leq 3\sqrt{1+\epsilon}.
 	\end{align*}
 	Finally, one has
 	\begin{align*}
 		&\calP_{\calS_t}\tG(\calI - \calA^\ast \calA)\tG^\ast\calP_{\calS_t}\\
 		=&\lb \calP_{\calS_t} - \calP_{\calS}\rb \tG\tG^\ast \calP_{\calS_t}  +  \calP_{\calS}\tG\tG^\ast \lb \calP_{\calS_t}  -  \calP_{\calS}\rb \\
 		&\qquad +  \calP_{\calS}\tG\tG^\ast  \calP_{\calS} - \calP_{\calS}\tG\calA^\ast \calA\tG^\ast  \calP_{\calS}\\
 		&\qquad +\lb \calP_{\calS} - \calP_{\calS_t} \rb\tG\calA^\ast \calA\tG^\ast  \calP_{\calS}\\ &\qquad + \calP_{\calS_t} \tG\calA^\ast \calA\tG^\ast  \lb \calP_{\calS}- \calP_{\calS_t} \rb,
 	\end{align*}
 	which implies that
 	\begin{align*}
 		&\opnorm{ \calP_{\calS_t}\tG(\calI - \calA^\ast \calA)\tG^\ast\calP_{\calS_t} } \\
 		&\leq \opnorm{ \lb \calP_{\calS_t} - \calP_{\calS}\rb \tG\tG^\ast \calP_{\calS_t} } + \opnorm{ \calP_{\calS}\tG\tG^\ast \lb \calP_{\calS_t}  -  \calP_{\calS}\rb } \\
 		&\qquad + \opnorm{ \calP_{\calS}\tG\tG^\ast  \calP_{\calS} - \calP_{\calS}\tG\calA^\ast \calA\tG^\ast  \calP_{\calS} }\\
 		&\qquad +\opnorm{\lb \calP_{\calS} - \calP_{\calS_t} \rb\tG\calA^\ast \calA\tG^\ast  \calP_{\calS} } \\
 		&\qquad + \opnorm{ \calP_{\calS_t} \tG\calA^\ast \calA\tG^\ast  \lb \calP_{\calS}- \calP_{\calS_t} \rb}\\
 		&\leq 2\opnorm{  \calP_{\calS_t} - \calP_{\calS} }  + \epsilon \\
 		&\qquad +2 \opnorm{\calA}\cdot \opnorm{  \calP_{\calS_t} - \calP_{\calS} }   \cdot \lb  \opnorm{\calA\tG^\ast \calP_{\calS} } + \opnorm{\calA\tG^\ast \calP_{\calS_t}} \rb\\
 		&\leq \epsilon + 2\cdot  \frac{2\epsilon}{\sqrt{K\mu_0s (1+\epsilon)}} \\
 		&\qquad + 2\sqrt{K\mu_0s} \cdot \frac{2\epsilon}{\sqrt{K\mu_0s (1+\epsilon)}}  \cdot \lb \sqrt{1+\epsilon} + 3\sqrt{1+\epsilon}\rb\\
 		&\leq \epsilon + 4\epsilon + 16\epsilon \\
 		&\leq 21\epsilon.
 	\end{align*}
 \end{proof}

 \section{Conclusion}
In this work, we investigate the problem of simultaneous blind super-resolution and demixing in ISAC, formulating it as a low-rank matrix demixing problem. We propose an RGD method to solve this problem and establish its sample complexity, along with a linear convergence guarantee. Notably, we demonstrate that the convergence rate is independent of the condition number of the target matrices. The empirical effectiveness of our algorithm is validated through extensive numerical experiments.

\bibliographystyle{IEEEbib}

\bibliography{refBlind}

\end{document}